\newtheorem{theorem}{Theorem}
\theoremstyle{plain}
\numberwithin{equation}{section}
\begin{document}
\title[On the solution of Four Color problem]{Special partial graphs}
\author{Natalia Malinina}
\address[Moscow Aviation Institute (National Research University), 4, Volocolamskiy Shosse, Moscow, GCP-4, Russian Federation]
\newline%
\email[Natalia Malinina]{malinina806@gmail.com}%


\thanks{This paper is in preliminary form and it will be submitted for publication in some journal.}
\date{December, 2012}
\subjclass{Primary 05C10; Secondary 05C15} %
\keywords{triangulation, conjugated triangulation, four color problem, dual partial graphs}%
\dedicatory{Dedicated to my father Leonid Malinin}

\begin{abstract}
The attempts to prove the Four Color Problem last for years. A little hope arises that the properties of the minimal partial triangulations will be very useful for the solution of the Four Color Problem. That is why the material of this paper is devoted to the examination of the specific partial graphs and their properties. Such graphs will have all the elements of the planar conjugated triangulation, but will have the minimal size. And it will be quite interesting to find out their properties in order to search in the sequel for the possibility to prove the Four Color Problem on the base of their characteristics.  
\end{abstract}
\maketitle

\section{Introduction}

Right along it was used to examine the arbitrary graph's properties basing on the properties of some partial graphs, which have similar characteristics as the graphs in question. So the properties of the planar triangulations will be better to examine on the base of the partial graphs of special type. And such graphs will be introduced and examined here.

\section{The minimal closed graph and its properties} 

Let us introduce graph $H_{min}$. Such graph has all the elements of the planar conjugated triangulation $H(V,Q)$, but has the minimal possible size (fig. \ref {Fig:1}).

\begin{figure}[htb]
		\includegraphics[width=0.3\textwidth]{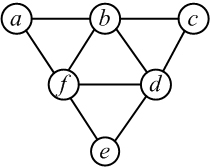}
	\caption{The closed minimal graph}
	\label{Fig:1}
\end{figure}

Graph $H_{min}$ has three faces of the first subset $\{f_{hi}^{(1)}\}$, one face of the second subset $\{f_{hi}^{(2)}\}$, $9$ edges and $6$ vertexes. It is evident that such a graph is unique. Let us prove its properties.

\begin{theorem}
Graph $H_{min}$ has five variants of the edges' orientation along the Euler circuit at the given direction of the pass through one external vertex.
\end{theorem}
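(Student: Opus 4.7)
The plan is to count Eulerian orientations of $H_{min}$ compatible with the prescribed first edge at the chosen external vertex, and then to verify that each such orientation is realised by a genuine Euler circuit. This strategy exploits the observation that several distinct Euler circuits can induce the same edge orientation, so the number of orientation variants equals the number of admissible Eulerian orientations rather than the raw number of circuits.

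First I would spell out the combinatorial structure of $H_{min}$. Label its three external (degree $2$) vertices $A$, $B$, $C$ and its three internal (degree $4$) vertices $a$, $b$, $c$; the six ``outer'' edges form the hexagonal rim $A\,c\,B\,a\,C\,b\,A$ and the three ``inner'' edges $ab$, $bc$, $ca$ form the central triangle. All vertex degrees are even, so Euler circuits exist. Fixing the direction of the pass at $A$ determines the orientations of both edges at $A$ (one out, one in, since $\deg A=2$); at each of the two remaining external vertices $B$, $C$ the two incident edges must then split one-in one-out, giving a binary choice at each, hence $2\times 2=4$ configurations of the rim.

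The central step is the case analysis of these four rim configurations. Exactly one configuration is \emph{coherent}, meaning that all six outer edges are oriented consistently around the hexagon; in that case each internal vertex has its two outer edges already balanced (one in, one out), so its two inner edges must also split one-in one-out, and the unique way to enforce this simultaneously at $a$, $b$, and $c$ is to orient the inner triangle cyclically, producing $2$ Eulerian orientations (clockwise and counter-clockwise). In each of the remaining three \emph{twisted} configurations, at least one internal vertex has both outer edges incoming (or both outgoing), which forces its two inner edges to be both outgoing (resp.\ incoming); this forced choice then propagates around the inner triangle through the balance condition at the other two internal vertices, yielding a unique Eulerian orientation in each twisted case. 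The total count is therefore $2+1+1+1=5$.

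Finally I would verify that each of these five Eulerian orientations actually supports an Euler circuit beginning at $A$ along the prescribed edge. This is immediate from Euler's theorem for digraphs: a connected digraph in which every vertex is balanced admits a directed Euler circuit, which can be cyclically rotated to start with any chosen arc. The main obstacle lies in the case analysis of the third paragraph --- one must check that both cyclic orientations of the inner triangle really are Eulerian in the coherent case, and that in each of the three twisted cases the forced inner-edge orientations propagate around the triangle consistently without producing a contradiction at the third internal vertex. Once this bookkeeping is carried out the count $5$ follows, and the theorem is established.
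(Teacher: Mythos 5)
Your proof is correct, and your reconstruction of $H_{min}$ (three degree-$2$ external vertices alternating with three degree-$4$ internal vertices on a hexagonal rim, plus the inner triangle) matches the graph the paper works with; but your route is genuinely different from the paper's. The paper fixes the pass $f\to a\to b$ through the external vertex $a$ and then enumerates the $3\times 3=9$ combinations of ``second exit from $f$'' and ``second entry into $b$'' (the two \emph{internal} vertices adjacent to $a$), discarding by inspection of a table the four combinations $(12),(13),(21),(31)$ that violate the in/out balance at $f$ or $b$; the surviving five are exhibited as the five orientations. You instead parametrize by the binary in/out choice at each of the two \emph{other external} vertices, obtaining four rim configurations, and resolve the inner triangle by the balance condition at the internal vertices --- two cyclic orientations in the coherent case, a unique forced orientation in each of the three twisted cases, giving $2+1+1+1=5$. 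Your decomposition buys two things the paper's table does not: a structural explanation of where the number $5$ comes from, and an explicit appeal to the directed Euler theorem (balanced plus connected implies Eulerian) to certify that every admissible orientation really carries an Euler circuit, which the paper leaves implicit in its figure. The paper's enumeration, on the other hand, is closer to an exhaustive check and requires no argument about propagation around the inner triangle. The one piece of bookkeeping your write-up defers --- that the forced inner-edge orientations in the twisted cases close up consistently at the third internal vertex --- does go through in all three cases, so the proof is complete.
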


\begin {proof}
Let us examine all the possible variants of graph's $H_{min}$ edges' orientation and choose those of them, which correspond to Euler circuits. Let us specify two arbitrary taken ordered pairs of the vertexes, for instance, $fa$ and $ab$. Thus, one exit from vertex $f$ is already given, and one entry into vertex $b$ is also given. As far as every vertex must have the equal numbers of both the entries and the exits, vertex $f$ can have one more exit out of the remaining three ones: $fb$, $fd$ or $fe$, and vertex $b$ --- one entry out of the remaining three ones: $fb$, $db$ or $cb$. 

Thus, we have $9$ assumed or supposed variants of all the combinations of the exits from vertex $f$ and the entries into vertex $b$. It is evident that all the possible variants of the edges' orientation in graph $H_{min}$, which may correspond to Euler circuits are exhausted. But some variants do not permit Euler circuits.

Let us make a table, which contains all nine variants. Four of nine variants (fig. \ref{Fig:2}) do not permit the existence of Euler circuit: $(12)$, $(13)$, $(21)$ and $(31)$. Another five variants permit the existence of Euler circuit: $(11)$, $(22)$, $(23)$, $(32)$ and $(33)$. Other variants at the given direction through the vertex $a$ are principally impossible. As far as only two passes' directions are possible through the external vertex, then the total amount of the variant's groups will be $10$ (taking into account $5$ variants of the edges' direction). 

\begin{figure}[htb]
		\includegraphics[width=1\textwidth]{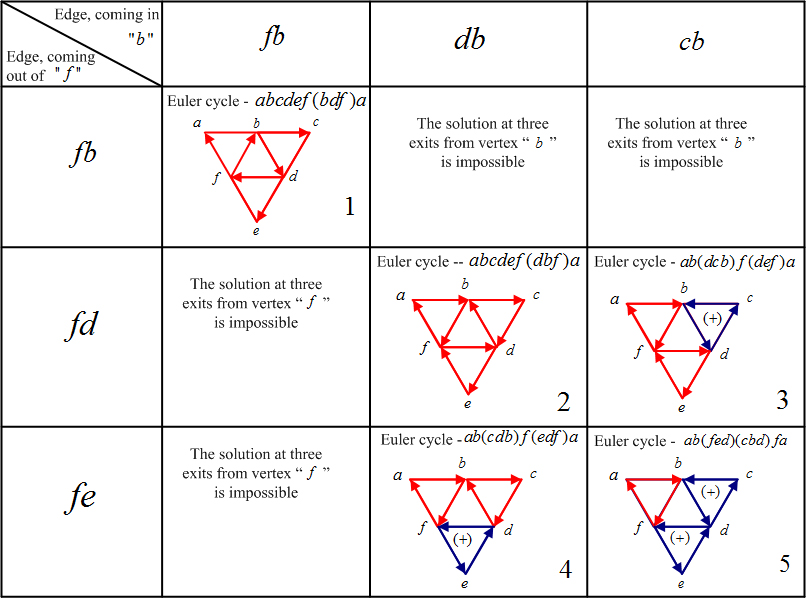}
	\caption{}
	\label{Fig:2}
\end{figure}

\end {proof}

Let us differentiate the variants of Euler circuits with the signs of the faces' bypasses $\{h_j \}_1$ as $(+)$ --- counterclockwise (blue) and as $(-)$ --- clockwise (red).

\begin {enumerate}
\item
External circuit --- (-); inner circuit --- (-);
\item
All the circuits --- (-);
\item
Two circuits ---  (-); one circuit --- (+);
\item
Two circuits --- (-); one circuit --- (+);
\item
Two circuits --- (+); one circuit --- (-);
\end {enumerate}

The foresaid is equal for the closed graph $H_{min}$. What we shall see if graph $H_{min}$ is the open-ended one? 

Let us also examine such opportunity.

\section {The minimal opened graph and its properties}

For the examining the properties of the opened graph let us introduce the minimal connected opened graph $\widetilde{H}_{min}$, which has three faces of the first set $\{f_{hj}^{(1)}\}$, no faces of the second set $\{f_{hj}^{(2)}\}$, $9$ edges and $7$ vertexes. Two vertexes are the cut points. This graph is presented in fig. \ref{Fig:4}.

\begin{figure}[htb]
		\includegraphics[width=0.4\textwidth]{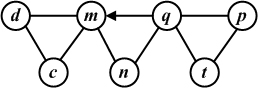}
	\caption{}
	\label{Fig:4}
\end{figure}

The connected graph $\widetilde{H}_{min}$ corresponds to some minimal part of graph $H$, which is cut out of graph $H$ and which is open-ended. Graph $\widetilde{H}_{min}$ has three faces. One face, let us denote it as the middle face, contains two vertexes, which are the cut points. Two other faces may be denoted as the last faces. Let us denote the edge, which is connecting the cut points in graph $\widetilde{H}_{min}$ as the main edge. The other $6$ edges will be denoted as the attached edges. As far as all graphs $\widetilde{H}_{min}$ are equal concerning their configuration, then for the examining of their properties it will be sufficient to examine only one out of the whole amount of them.

\begin {theorem}
The minimal graph $\widetilde{H}_{min}$ of the planar conjugated triangulation contains $9$ variants of the groups of Euler circuits, which pass through this subgraph.
\end {theorem}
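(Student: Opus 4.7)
The plan is to imitate the combinatorial bookkeeping used in the proof of Theorem 1, adapting it to the open-ended situation. First I would fix notation: call the two cut points $u$ and $v$, let $uv$ denote the main edge, and label the six attached edges by which of the three faces (the middle face or one of the two last faces) they bound. Because $\widetilde{H}_{min}$ is a cutout of the ambient conjugated triangulation $H$, every Euler circuit of $H$ that visits $\widetilde{H}_{min}$ must enter and leave through $u$ and $v$, and at every vertex strictly inside $\widetilde{H}_{min}$ the numbers of internal entries and internal exits must balance.

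Second, I would eliminate the global symmetry exactly as in Theorem 1: fix one exit from $u$ along a chosen attached edge together with one entry into $v$ along a chosen attached edge. This reference choice is justified by the fact that all graphs $\widetilde{H}_{min}$ are congruent, so it suffices to enumerate orientations relative to a single such pair. With this fixed, the remaining freedom at $u$ lies in choosing one further exit among the remaining internal edges incident to $u$ (the main edge or one of the other attached edges at $u$), and symmetrically at $v$. A direct tabulation then produces the raw list of candidate orientation patterns, in analogy with the $9$-entry table of Theorem 1.

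Third, I would sort the candidates by whether the main edge $uv$ is used as a $u\!\to\! v$ transit, a $v\!\to\! u$ transit, or is forced by balance; this distinction is where the open case diverges from the closed case, because the cut points admit external continuation and therefore relax the parity constraints that killed four of the nine variants in Theorem 1. I expect that once these parities are accounted for, exactly nine combinations survive, yielding the nine claimed Euler-circuit groups. I would finish by drawing each surviving combination on the figure to certify that it actually closes into a legitimate Euler traversal through the subgraph.

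The main obstacle will be the last verification step: checking that every locally-balanced orientation at $u$ and $v$ is actually realized by a genuine Euler walk through all nine edges of $\widetilde{H}_{min}$, rather than splitting into shorter disjoint closed walks. A secondary difficulty is proving exhaustiveness, i.e.\ that no Euler-circuit group has been missed; I would handle this by noting that, once the reference pair at $u$ and $v$ is fixed, an Euler traversal through $\widetilde{H}_{min}$ is uniquely determined up to reversal by its behaviour at the two cut points, so the enumeration above covers every possibility.
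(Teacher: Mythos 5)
There is a genuine gap, and it sits exactly where you deviate from the paper's normalization. The paper breaks the symmetry by fixing the direction of the \emph{main} edge $qm$ joining the two cut points: this supplies one exit from $q$ and one entry into $m$ simultaneously, and the two residual choices --- which one of the three \emph{remaining} edges at $q$ is the second exit, and which one of the three \emph{remaining} edges at $m$ is the second entry --- range over disjoint sets of edges (no edge other than $qm$ is incident to both cut points). Hence the choices are independent and the count $3\times 3=9$ is immediate. You instead fix an attached-edge exit at $u$ and an attached-edge entry at $v$, and then let the second choice at each cut point range over the remaining three edges \emph{including the main edge $uv$}. Those two residual choices are not independent: if the second exit selected at $u$ is $uv$ while the second entry selected at $v$ is an attached edge, the main edge must be oriented $u\to v$ at one end and $v\to u$ at the other. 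Four of your nine raw candidates are self-contradictory in this way, so your tabulation can yield at most five consistent local patterns, not nine; and even those five are counted relative to a different reference than the one implicit in the statement (variants per fixed direction of the edge connecting the cut points). The appeal to congruence of all graphs $\widetilde{H}_{min}$ does not repair this, because the orientations you are trying to classify do not all contain the same number of attached-edge exits at $u$, so your reference choice does not cut out classes of equal size that could be multiplied back.

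Two further points. First, your explanation of why all nine variants survive in the open case (``relaxed parity constraints at the cut points'') inverts the actual mechanism: in the paper's set-up the cut points are the vertices at which the $2$-in/$2$-out balance \emph{is} imposed, and it is the remaining vertices of $\widetilde{H}_{min}$ (which acquire further edges in the ambient graph $H$) that carry no local constraint, which is why no combination is eliminated as happened for the closed graph $H_{min}$. Second, your closing claim that an Euler traversal of the subgraph is ``uniquely determined up to reversal by its behaviour at the two cut points'' is both unproved and unnecessary: the theorem counts \emph{groups} of Euler circuits precisely because many circuits share the same orientation pattern on the seven edges incident to the cut points; exhaustiveness only requires that every such pattern appear in the $3\times 3$ table, which is what the main-edge normalization delivers.
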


\textit{Remark}: Under examining the possible variants of Euler circuits we will be interested only in those parts of the passes, which pass through the edges, which are incident to the cut points. 

\begin {proof}
Subgraph $\widetilde{H}_{min}$, which consists out of three faces $\{f_{hj}^{(1)}\}$, has two cut points (fig. \ref{Fig:4} --- vertexes $m$ and $q$). Let us agree to define the variants of Euler circuits with the help of setting the orientation only to the edges, which are incident to the cut points. Beforehand we will settle the orientation only by the edge, connecting the vertexes, which are the cut points.

Let us examine all the possible variants of the edges' orientation, taking as an example graph $\widetilde{H}_{min}$. We will examine all the variants of Euler circuits, which correspond to the direction of the edge connecting the cut points. All of them are presented in table (fig. \ref{Fig:5}). 

\begin{figure}[htb]
	\centering
		\includegraphics[width=1\textwidth]{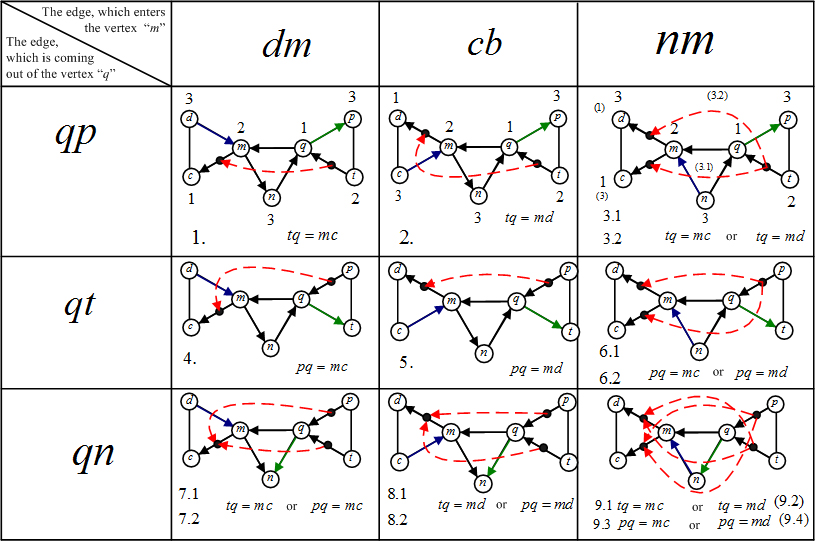}
	\caption{}
	\label{Fig:5}
\end{figure}

Let the edge $qm$ be given. So, it is given one exit from vertex $q$ and one entry into vertex $m$. In addition there are possible three variants of exits from vertex $q$ and three variants of the entries into vertex $m$. Thus, we have 9 expected variants of Euler circuits in graph $\widetilde{H}_{min}$. There none other variants of Euler circuits, which pass through subgraph $\widetilde{H}_{min}$.
\end {proof}

To make the visual interpretation more comfortable, the edges, entering the vertex $m$ are colored blue (the beginning of the cycle), and the edges, coming out of the vertex $q$ are colored green (the end of the cycle).

\section {The graph, which is conjugated to the minimal opened graph}

Let us also construct graph $G$, which will be the adjacency graph of the edge graph $H$. Let us examine subgraph $\widetilde{G}_{min}$ (fig. \ref{Fig:6}) of graph $G$. For more visualization subgraph $\widetilde{G}_{min}$ (on the left) is presented with much thinner lines. On the right it is presented with much fatter lines.

\begin{figure}[htb]
		\includegraphics[width=0.8\textwidth]{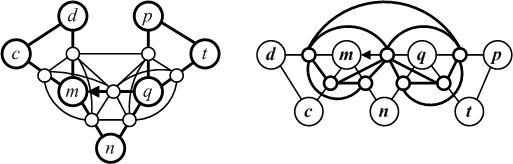}
	\caption{The graph, which is conjugated to $\widetilde{H}_{min}$}
	\label{Fig:6}
\end{figure}

Let subgraph $\widetilde{G}_{min}$ be an adjacency graph for only those $7$ edges of graph $\widetilde{H}_{min}$, which are incident to the cut points. All these edges are oriented along this or that Euler circuit. So each of them may be examined as an ordered pair of vertexes of graph $\widetilde{G}_{min}$. According to this every vertex of graph $\widetilde{G}_{min}$ may be indicated as an ordered pair of the vertexes of graph $\widetilde{H}_{min}$. Similarly to graph $\widetilde{H}_{min}$ we will denote one vertex of graph $\widetilde{G}_{min}$ as the main vertex, and the other vertexes --- as the attached vertexes.

Graph $\widetilde{G}_{min}$, which is conjugated to the minimal graph $\widetilde{H}_{min}$ has the following elements according to the construction:

\begin {itemize}
\item 
$3$ vertexes, appropriated to the edges of the middle face of the graph $\widetilde{H}_{min}$ (one vertex has a degree equal to $6$, two others --- $4$);
\item 
$4$ vertexes, appropriated to the edges of the last faces of graph $\widetilde{H}_{min}$ (all of them have the degree equal to $3$);
\item 
$3$ edges, connecting $3$ vertexes, appropriated to the edges of the middle face of graph $\widetilde{H}_{min}$;
\item 
$2$ edges, connecting $2$ vertexes, appropriate to the edges of last face of graph $\widetilde{H}_{min}$;
\item 
$8$ edges, each connecting two vertexes, appropriated one --- to the last face, and the other --- to the middle face of graph $\widetilde{H}_{min}$.
\end{itemize}

Altogether graph $\widetilde{G}_{min}$ has $7$ vertexes and $13$ edges. As graph $\tilde{H}_{min}$ always has the same configuration, graph $\widetilde{G}_{min}$ also always has the same configuration. Let us accept an arbitrary notation of graph's $\widetilde{H}_{min}$ vertexes, and denote the vertexes of graph $\widetilde{G}_{min}$ as the ordered pairs of graph's $\widetilde{H}_{min}$ vertexes in compliance with the variant of the pass along Euler circuit in graph $\widetilde{H}_{min}$.

Let us compose the adjacent matrix of graph's $\widetilde{G}_{min}$ vertexes. The rows will be arranged in the order of the clockwise bypass of graph $\widetilde{G}_{min}$, beginning from the vertex, which is corresponding to the edge of graph $\widetilde{H}_{min}$ connecting the cut points of graph $\widetilde{H}_{min}$. The matrix will always have the same appearance as it is presented in fig. \ref{Fig:7}, because graph $\widetilde{G}_{min}$ will always have 13 edges, which connect quite definite vertexes of the graph.

\begin{figure}[htb]
		\includegraphics[width=0.4\textwidth]{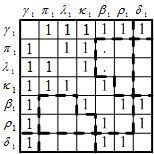}
	\caption{}
	\label{Fig:7}
\end{figure}

\begin{theorem}
Each of graph's $\widetilde{G}_{min}$ vertexes always can be signed with the help of the pair of three numbers ($1$, $2$ and $3$) thereby none of the adjacent ones will be signed equally.
\end {theorem}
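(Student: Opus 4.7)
The plan is to prove the statement by explicit construction: since the paper has already fixed the structure of $\widetilde{G}_{min}$ once and for all (Figure~\ref{Fig:7} gives its full adjacency matrix, and the bulleted enumeration preceding it lists every edge), the chromatic task reduces to exhibiting a single assignment of pairs from $\{1,2,3\}\times\{1,2,3\}$ (nine possible labels in total) that is proper with respect to this fixed adjacency pattern. I would begin by splitting the $7$ vertices of $\widetilde{G}_{min}$ into the two natural groups dictated by the construction: the three \emph{middle vertices} (the vertex of degree $6$ corresponding to the main edge of $\widetilde{H}_{min}$, together with the two vertices of degree $4$ sitting on the middle face), and the four \emph{attached vertices} of degree $3$ that come from the last faces.

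First I would record the two structural facts that drive the verification. The three middle vertices are pairwise adjacent (they account for the ``$3$ edges, connecting $3$ vertexes, appropriated to the edges of the middle face''), so they form a triangle and must receive three pairwise distinct labels. The four attached vertices split into two disjoint edges (the ``$2$ edges, connecting $2$ vertexes, appropriate to the edges of last face''); the remaining $8$ edges of $\widetilde{G}_{min}$ are precisely the middle-to-attached connections, and by the degree count each attached vertex is joined to exactly two middle vertices.

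Next I would write down a concrete assignment. A natural choice is to give the main vertex the label $(1,1)$, the two other middle vertices the labels $(2,2)$ and $(3,3)$, and then assign each attached vertex a label $(i,j)$ where $i$ and $j$ are chosen to avoid the (at most three) labels of its neighbors. Since every attached vertex has degree $3$ and nine labels are available, at least six admissible labels remain after forbidding neighbours' labels; in particular the two endpoints of each ``last-face edge'' can be made distinct. I would then walk through the matrix of Figure~\ref{Fig:7} row by row to check that the resulting labeling is proper on all $13$ edges; this is a purely mechanical verification once the labels are written down, and it is precisely what the fixed, canonical form of $\widetilde{G}_{min}$ (emphasized at the end of the previous section) makes possible.

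The main obstacle, and the only place where care is needed, is the bookkeeping for the four attached vertices: one must be sure that after forbidding the labels of its middle-vertex neighbours and of its partner across the last-face edge, a pair from $\{1,2,3\}^{2}$ still remains, and that the same pair is not accidentally reused across non-adjacent but constrained positions. I expect this to resolve cleanly because the degree bound $\Delta(\widetilde{G}_{min})=6<9$ already guarantees greedy colourability with nine labels; the point of the theorem, and of restricting to pairs drawn from $\{1,2,3\}$, is not existence alone but the compatibility of the labeling with the ordered-pair indexation of $\widetilde{G}_{min}$'s vertices introduced in the previous section, which is why an explicit labeling tied to that indexation (rather than an abstract greedy argument) is the right way to conclude.
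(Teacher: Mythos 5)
There is a genuine mismatch between the label set you use and the one the theorem is about, and it changes the difficulty of the statement. You work with all nine elements of $\{1,2,3\}\times\{1,2,3\}$, including the diagonal labels $(1,1)$, $(2,2)$, $(3,3)$, and you propose to place exactly these three diagonal labels on the middle triangle. The paper's own proof opens with ``Six pairs can be composed out of three numbers'': the admissible labels are only the six ordered pairs of \emph{distinct} numbers from $\{1,2,3\}$. This is forced by the semantics set up in the surrounding sections and used in the next theorem: a vertex of $\widetilde{G}_{min}$ is an oriented edge $uv$ of $\widetilde{H}_{min}$, and its label is the ordered pair $(\ell(u),\ell(v))$ of the numbers assigned to its endpoints, which are required to differ; every labeling written down later ($qm=12$, $mn=23$, $nq=31$, and so on) uses non-diagonal pairs only. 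A label such as $(1,1)$ has no meaning in that scheme, so your explicit assignment does not establish the statement the paper actually needs.

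With only six labels available the problem is no longer trivial: the maximum degree of $\widetilde{G}_{min}$ is $6$, so the greedy bound $\Delta+1=7$ underlying your remark that $6<9$ is not sufficient. The paper closes this gap by invoking Brooks' theorem: $\widetilde{G}_{min}$ is connected with $7$ vertices and $13$ edges, hence is neither complete nor an odd cycle, so $\chi(\widetilde{G}_{min})\le\Delta=6$, and the six non-diagonal pairs suffice. Your alternative route of an explicit coloring checked row by row against the adjacency matrix in fig.~\ref{Fig:7} would also work --- the very next theorem of the paper in effect exhibits such colorings --- but to carry it out you must restrict to the six non-diagonal pairs from the outset, which in particular rules out putting $(1,1),(2,2),(3,3)$ on the three pairwise adjacent middle vertices; those need three distinct non-diagonal labels such as $12$, $23$, $31$.
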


\begin {proof}
Six pairs can be composed out of three numbers ($1$, $2$ and $3$). Each pair can be presented with some color. The maximal degree of graph's $\widetilde{G}_{min}$ vertex is not more than $6$. So the chromatic number of graph $\widetilde{G}_{min}$ is not more than $6$ (it follows from Brook's theorem).
\end {proof}

\begin {theorem}
Let it be so that the cut points in graph $\widetilde{H}_{min}$ are signed with the numbers $1$ and $2$; the edges are oriented in the direction of the Euler circuits' bypass. Under the equity of the previous theorem the edges of graph $\widetilde{H}_{min}$ may be signed with the pairs of three numbers ($1$, $2$ or $3$) thereby none of the adjacent edges will be colored in the same way. Let us suppose that we have two arbitrary taken edges of graph $\widetilde{H}_{min}$, which are situated on Euler circuit at the distance of one link (unit) one from another and they belong to the last faces of graph $\widetilde{H}_{min}$. And let them be colored equally.

Then there are exactly $16$ variants for the signing of graph's $\widetilde{G}_{min}$ vertexes with the numbers $1$, $2$ and $3$, so every pair of graph's $\widetilde{H}_{min}$ edges will be signed by the pair of the numbers ($1$, $2$ or $3$) and at that none of the sequentially connected edges, which are situated along Euler circuit, will be signed the same way or equally.
\end {theorem}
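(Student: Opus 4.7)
The plan is to reduce the counting of $\widetilde{G}_{min}$-labelings to a labeling problem on the vertices of $\widetilde{H}_{min}$ and resolve it by structured case analysis. I would begin by fixing a concrete realization of $\widetilde{H}_{min}$: a middle face on vertices $\{q,m,x\}$ together with two last faces $\{q,y_1,y_2\}$ and $\{m,z_1,z_2\}$, and by choosing one Euler circuit, for instance $q\to y_1\to y_2\to q\to m\to z_1\to z_2\to m\to x\to q$. This orientation assigns to each of the seven edges incident to the cut points an ordered pair of endpoint labels, and these ordered pairs are exactly the labels that the corresponding vertices of $\widetilde{G}_{min}$ receive. A labeling of $\widetilde{G}_{min}$ is therefore the same data as a labeling of the seven vertices of $\widetilde{H}_{min}$ by $\{1,2,3\}$, with $q=1$ and $m=2$ fixed and only the labels of $x, y_1, y_2, z_1, z_2$ still to be chosen.

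Next I would translate the "no two sequentially connected edges equally labeled" condition into vertex-level conditions: two edges sharing a vertex $v$ yield equal ordered pairs iff they have the same orientation at $v$ and their opposite endpoints carry equal labels. Applying this criterion at each of the seven vertices of $\widetilde{H}_{min}$ produces an explicit list of forbidden coincidences among the five free labels. I would then impose the distinguished hypothesis---that the two chosen last-face edges at Euler-circuit distance one carry the same ordered pair---which translates into a specific equality among two of the labels $y_1, y_2, z_1, z_2$ and removes a further degree of freedom from the enumeration.

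Having cut the problem down in this way, I would finish by direct enumeration. The remaining system should decouple over the three faces of $\widetilde{H}_{min}$: the middle face contributes a small number of options for the label of $x$, and each last face contributes independently a small number of options for the labels of its two non-cut-point vertices under the surviving local inequalities. The orientation flexibility of the main edge $qm$ (the Euler circuit can enter either cut point first) contributes an additional factor of two. Multiplying out the surviving independent choices should recover exactly $16 = 2^{4}$, matching the claim.

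The main obstacle I expect is not the arithmetic but the correct reading of the hypothesis. The phrase "at the distance of one link on the Euler circuit" must be matched against a concrete pair of oriented edges in the chosen realization, so that the induced label-equality is unambiguous and compatible with the adjacency constraint already imposed. To fix this I would number the nine edges of my chosen Euler circuit, list all last-face pairs at circuit-distance one, and select the unique pair whose equal labeling is not already forbidden by the adjacency rule; only after this identification does the remaining enumeration become a routine verification producing the factor $16$.
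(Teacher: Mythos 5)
There is a genuine gap, and it lies in what is being counted. The figure of $16$ in the theorem is not the number of admissible labelings compatible with a \emph{single} Euler circuit of $\widetilde{H}_{min}$; it is the total obtained by running the labeling argument over \emph{all nine} orientation variants of the Euler circuit established in the preceding theorem (the variants indexed $1$ through $9$ in the paper, several of which split further into sub-cases $3.1/3.2$, $6.1/6.2$, $7.1/7.2$, $8.1/8.2$, $9.1$--$9.4$). For any one fixed circuit the constraints propagate very rigidly: once the cut points are signed $q=1$, $m=2$ and the third middle-face vertex is forced to $3$, the hypothesis that the two last-face edges at circuit-distance one are labeled equally pins down essentially everything, and each variant contributes only one or two admissible labelings (the paper explicitly notes, e.g., ``in variant $2$ we have one unique solution''). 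The per-variant counts are $1,1,2,1,1,2,2,2,4$, summing to $16$. By fixing one concrete Euler circuit at the outset, you are counting a different and much smaller set.

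Your anticipated factorization $16=2^{4}$ therefore has no counterpart in the actual structure of the count, and the factor of two you attribute to reversing the main edge $qm$ does not appear in the paper at all: all nine variants keep the main edge oriented from $q$ to $m$ (they differ only in which additional edges serve as the second exit from $q$ and the second entry into $m$). The decoupled ``independent choices per face'' picture also fails, because the equality hypothesis on the two distinguished last-face edges couples the labels across the two last faces through the cut points. To repair the argument you would need to restore the outer loop over the nine Euler-circuit variants, carry out the forced-label propagation separately in each, and only then total the surviving labelings; that is precisely the paper's exhaustive case analysis.
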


\begin {proof}

For the theorem to be proved it is sufficient to examine all the possible variants of signing of graph's $\widetilde{G}_{min}$ vertexes, which correspond to $9$ variants of graph's $\widetilde{H}_{min}$ bypass along Euler circuit and which will meet the theorem's conditions.

We will examine one variant, for example, \textbf{variant $2$}, more thoroughly. Graph $\widetilde{H}_{min}$ with the orientation of its edges along Euler circuit is presented in fig. \ref{Fig:8}. Graph $\widetilde{G}_{min}$ (green color) and its adjacency matrix are also presented in fig. \ref{Fig:8}. 

\begin{figure}[h]
		\includegraphics[width=0.95\textwidth]{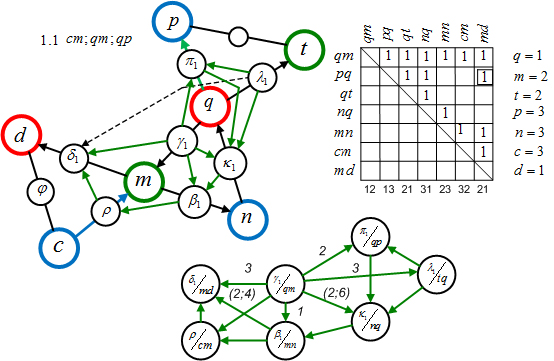}
	\caption{}
	\label{Fig:8}
\end{figure}

For a simpler visualization all the arcs in the matrix are changed into the edges. That is why the matrix has the form of the upper triangle. If the vertexes $q$ and $m$ are signed with numbers $1$ (red) and $2$ (green), then the vertex $n$ can be thereafter signed with the number $3$ (blue). Then the vertexes of graph $\widetilde{G}_{min}$ will be signed: $qm=\gamma_1=12$; $mn=\beta_1=23$; $nq=\kappa_1=31$.

The vertexes on the edges $tq$ and $md$ belong to Euler circuit (in fig. \ref{Fig:8} they are connected with the help of the dotted line) and at the same time they belong to the last faces of graph $\widetilde{H}_{min}$. That is why under the condition: $tq=md$ or $t=m=2$; $q=d=1$. Notably: $tq=21$ and $md=21$. At this condition for the signing of the $cm$ and $qp$ only the unique solution is available: $cm=32$ and $qp=13$.

So, in variant $2$ we have one unique solution. 

Let: $q=1$; $m=2$; $t=2$; $p=3$; $n=3$; $c=3$; $d=1$.

Then: $qm=12$; $qp=13$; $tq=21$; $nq=31$; $mn=23$; $cm=32$; $md=21$. 

So we'll have: $dc=13$; $cd=31$ or $pt=32$; $tp=23$.

\textbf{Variant $1$} (fig. \ref{Fig:9}): $q=1;m=2;n=3$. 

Hence: $qm=12;nq=31;mn=23$. Next: $tq=mc$, so, $t=m=2;q=c=1$.

Therefore: $tq=21$; $mc=21$. Notation of $dm$ and $qp$ is the following: $dm=32$; $qp=13$. So, for variant $1$ we have: $q=1$; $m=2$; $t=2$; $p=3$; $n=3$; $c=1$; $d=3$; Then: $qm=12$; $qp=13$; $tq=21$; $nq=31$; $mn=23$; $mc=21$; $dm=32$.
 
Finally we'll have: $dc=31$; $pt=32$ or $cd=13$; $tp=23$.

\begin{figure}[h]
		\includegraphics[width=0.95\textwidth]{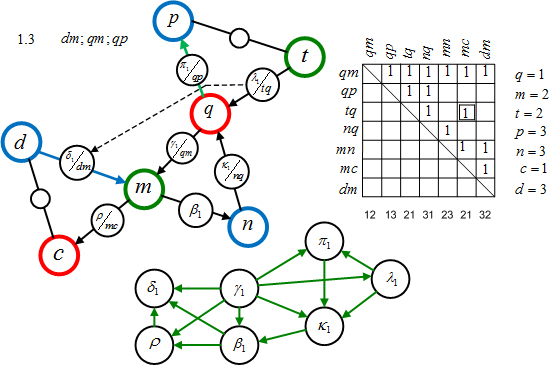}
	\caption{}
	\label{Fig:9}
\end{figure}

\textbf{Variant $3.1$} (fig. \ref{Fig:10}): $q=1$; $m=2$; $n=3$; Hence: $qm=12$; $nq=31$; $nm=32$. Next: $tq=mc=21$.

Thereafter: $qp=13$; $md=23$. 

So, for variant $3.1$ we'll have:

Let: $q=1$; $m=2$; $t=2$; $p=3$; $n=3$; $c=1$; $d=3$. Then: $qm=12$; $qp=13$; $tq=21$; $nq=31$; $nm=32$; $mc=23$; $md=21$.

Finally we'll have: $dc=31$; $cd=13$ or $pt=32$; $tp=23$.

\textbf{Variant $3.2$} (fig. \ref{Fig:10}): $q=1$; $m=2$; $n=3$. 

Hence: $qm=12$; $nq=31$; $nm=32$. Next, $tq=md$, notably, $t=m=2$; $q=d=1$; $tq=md=21$. 

Therefore, $qp=13$, and $mc=23$. 

So for variant $3.2$ we have:

Let: $q=1$; $m=2$; $t=2$; $p=3$; $n=3$; $c=3$; $d=1$. Then: $qm=12$; $qp=13$; $tq=21$; $nq=31$; $nm=32$; $mc=23$; $md=21$.
 
Finally we'll have: $dc=13$; $cd=31$ or $pt=32$; $tp=23$.

\begin{figure}[h]
		\includegraphics[width=0.95\textwidth]{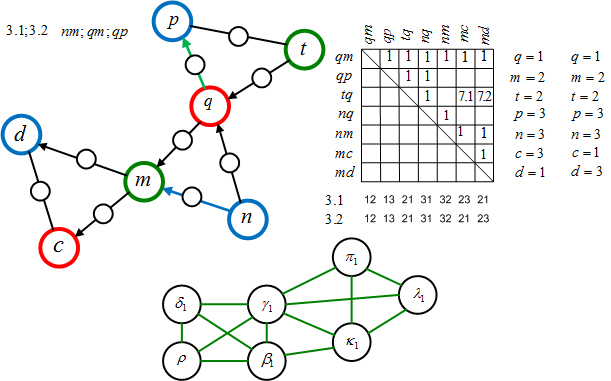}
	\caption{}
	\label{Fig:10}
\end{figure}

\textbf{Variant $4$} (fig. \ref{Fig:11}): $q=1$; $m=2$; $n=3$. 

Hence: $qm=12$; $nq=31$; $mn=23$. Next: $pq=mc$. Thereafter: $p=m=2$; $q=c=1$; $pq=mc=21$. 

The notation of $qt$ and $dm$ will be the following: $qt=13$; $dm=32$. 

For variant $4$ we have: 

Let: $q=1$; $m=2$; $t=3$; $p=2$; $n=3$; $c=1$; $d=3$. Then: $qm=12$; $pq=21$; $qt=13$; $nq=31$; $mn=23$; $mc=21$; $dm=32$. 

Finally we'll have: $dc=31$; $cd=13$ or $pt=23$; $tp=32$.

\begin{figure}[h]
		\includegraphics[width=0.95\textwidth]{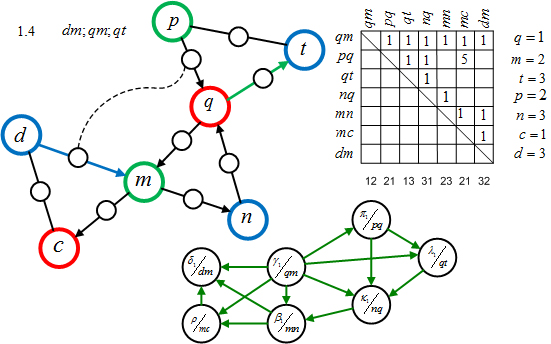}
	\caption{}
	\label{Fig:11}
\end{figure}

\textbf{Variant $5$} (fig. \ref{Fig:12}): $q=1$; $m=2$; $n=3$. Hence: $qm=12$; $mn=23$; $nq=31$; $pq=md$.

Thereafter: $p=m=2$; $q=d=1$; $pq=md=21$. The notation of $cm$ and $qt$ will be the following: $cm=32$; $qt=13$. 

So, for variant $5$ we have:

Let: $q=1$; $m=2$; $t=3$; $p=2$; $n=3$; $c=3$; $d=1$. Then: $qm=12$; $pq=21$; $qt=13$; $nq=31$; $mn=23$; $cm=32$; $md=21$.

Finally we'll have: $dc=13$; $cd=23$ or $pt=23$; $tp=32$.

\begin{figure}[h]
		\includegraphics[width=0.95\textwidth]{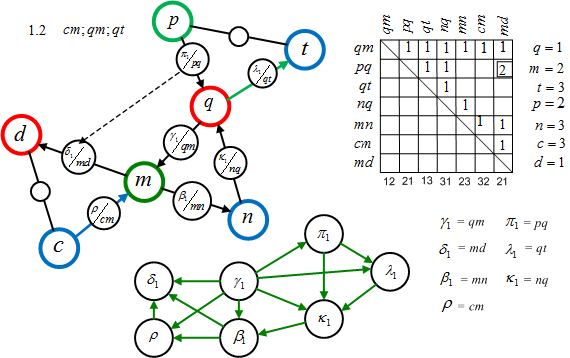}
	\caption{}
	\label{Fig:12}
\end{figure}

\textbf{Variant $6.1$} (fig. \ref{Fig:13}): $q=1$; $m=2$; $n=3$; $qm=12$; $nq=31$; $nm=32$. Next: $pq=mc=21$. Thereafter, $qt=13$; $md=23$. 

So, for variant $6.1$ we have:

Let: $q=1$; $m=2$; $t=3$; $p=2$; $n=3$; $c=1$; $d=3$. Then: $qm=12$; $pq=21$; $qt=13$; $nq=31$; $nm=32$; $mc=21$; $md=23$.

Finally we'll have: $dc=31$; $cd=13$ or $pt=23$; $tp=32$.

\begin{figure}[h]
		\includegraphics[width=0.95\textwidth]{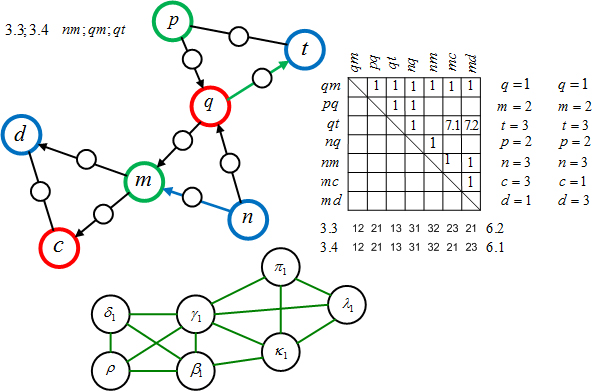}
	\caption{}
	\label{Fig:13}
\end{figure}

\textbf{Variant $6.2$} (fig. \ref{Fig:13}): $q=1$; $m=2$; $n=3$. $qm=12$; $nq=31$; $nm=32$. Next: $pq=md=21$. Then: $qt=13$ and $mc=23$. 

So, for variant $6.2$, we have:

Let: $q=1$; $m=2$; $t=3$; $p=2$; $n=3$; $c=3$; $d=1$. Then: $qm=12$; $pq=21$; $qt=13$; $nq=31$; $nm=32$; $mc=23$; $md=21$.

Finally we'll have: $dc=13$; $cd=31$ or $pt=23$; $tp=32$.

\textbf{Variant $7.1$} (fig. \ref{Fig:14}): $q=1$; $m=2$; $n=3$. 

Hence: $qm=12$; $qn=13$; $mn=23$. Next: $tq=mc=21$. Thereafter: $pq=31$ and $dm=32$. 

So, for variant $7.1$ we have:

Let: $q=1$; $m=2$; $t=2$; $p=3$; $n=3$; $c=1$; $d=3$. Then: $qm=12$; $pq=31$; $tq=21$; $qn=13$; $mn=23$; $mc=21$; $dm=32$. 

Finally we'll have: $dc=31$; $cd=13$ or $pt=32$; $tp=23$.

\begin{figure}[h]
		\includegraphics[width=0.95\textwidth]{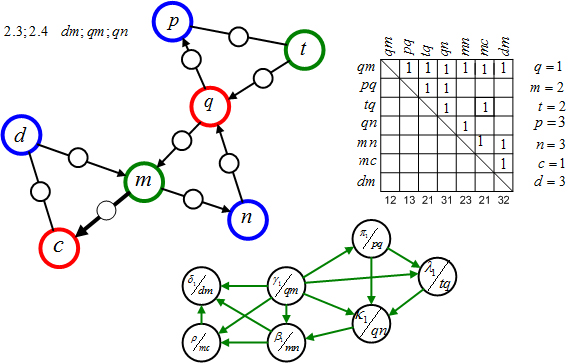}
	\caption{}
	\label{Fig:14}
\end{figure}

\textbf{Variant $7.2$} (fig. \ref{Fig:14}): $q=1$; $m=2$; $n=3$. 

Hence: $qm=12$; $qn=13$; $mn=23$. Next: $pq=mc=21$. Thereafter: $tq=31$; $dm=32$. 

So, for variant 7.2 we have the following relations:
 
Let: $q=1$; $m=2$; $t=2$; $p=3$; $n=3$; $c=1$; $d=3$. Then: $qm=12$; $pq=21$; $tq=31$; $qn=13$; $mn=23$; $mc=21$; $dm=32$. 

Finally we'll have: $dc=31$; $cd=13$ or $pt=32$; $tp=23$.

\textbf{Variant $8.1$} (fig. \ref{Fig:15}): $q=1$; $m=2$; $n=3$. 

Hence: $qm=12$; $qn=13$; $mn=23$. Next: $tq=md=21$. Thereafter: $cm=32$; $pq=31$. 

So, for variant $8.1$ we have:

Let: $q=1$; $m=2$; $t=2$; $p=3$; $n=3$; $c=3$; $d=1$. 

Then: $qm=12$; $pq=31$; $tq=21$; $qn=13$; $mn=23$; $cm=32$; $md=21$.

Finally we'll have: $cd=31$; $dc=13$ or $pt=32$; $tp=23$.

\begin{figure}[h]
		\includegraphics[width=0.95\textwidth]{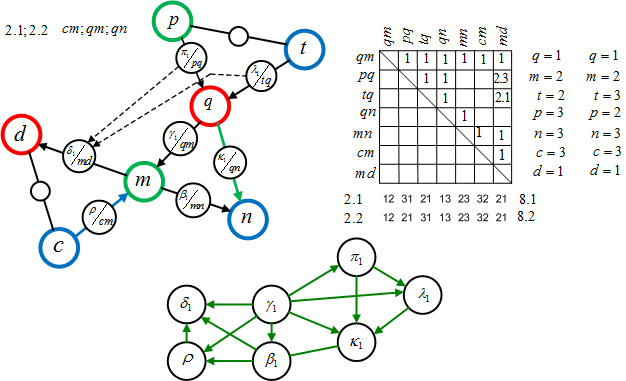}
	\caption{}
	\label{Fig:15}
\end{figure}

\textbf{Variant $8.2$} (fig. \ref{Fig:15}): $q=1$; $m=2$; $n=3$. 

Hence: $qm=12$; $qn=13$; $mn=23$. Next: $pq=md=21$. Thereafter: $tq=31$; $cm=32$. 

So, for variant $8.2$ we have:

Let: $q=1$; $m=2$; $t=3$; $p=2$; $n=3$; $c=3$; $d=1$. 

Then: $qm=12$; $pq=21$; $tq=31$; $qn=13$; $mn=23$; $cm=32$; $md=21$.

Finally we'll have: $cd=31$; $dc=13$ or $pt=23$; $tp=32$.

\textbf{Variant $9.1$} (fig. \ref{Fig:16}): $q=1$; $m=2$; $n=3$. 

Hence: $qm=12$; $qn=13$; $nm=32$. Next: $tq=mc=21$. Thereafter: $pq=31$; $md=23$. 

So, for variant $9.1$ we have:

Let: $q=1$; $m=2$; $t=3$; $p=2$; $n=3$; $c=1$; $d=3$. Then: $qm=12$;$pq=31$; $tq=21$; $qn=13$; $nm=32$; $mc=21$; $md=23$. 

Finally we'll have: $dc=31$; $cd=13$ or $pt=23$; $tp=32$.

\begin{figure}[h]
		\includegraphics[width=0.95\textwidth]{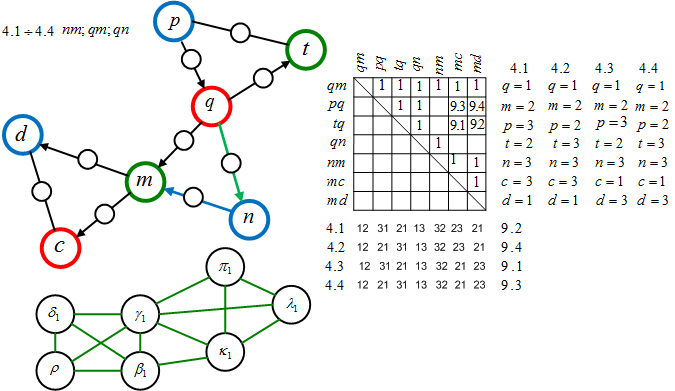}
	\caption{}
	\label{Fig:16}
\end{figure}

\textbf{Variant $9.2$} (fig. \ref{Fig:16}): $q=1$; $m=2$; $n=3$. 

Hence: $qm=12$; $qn=13$; $nm=32$. Next: $tq=md=21$. Thereafter: $pq=32$; $mc=23$. 

So, for variant $9.2$ we have:

Let: $q=1$; $m=2$; $p=3$; $n=3$; $c=3$; $d=1$. Then: $qm=12$; $pq=31$; $tq=21$; $qn=13$; $nm=32$; $mc=23$; $md=21$.

Finally we'll have: $dc=13$; $cd=31$ or $pt=32$; $tp=23$

\textbf{Variant $9.3$} (fig. \ref{Fig:16}): $q=1$; $m=2$; $n=3$. 

Hence: $pq=mc=21$. Thereafter: $tq=31$; $md=23$.

Variant $9.4$ (fig. \ref{Fig:16}): $q=1$; $m=2$; $n=3$. 

Hence: $pq=md=21$. Thereafter: $tq=31$; $mc=23$.

Let us summarize the results according to variants $9.1$, $9.2$, $9.3$ and $9.4$ in table (fig \ref{Fig:17}).

\begin{figure}[h]
		\includegraphics[width=0.55\textwidth]{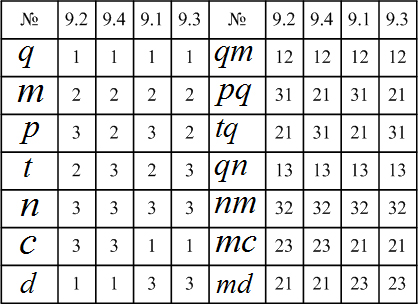}
	\caption{}
	\label{Fig:17}
\end{figure}

According to the mentioned results it is easy to be sure that all the possible variants are checked. The total number of the received variants is equal to $16$. 
\end {proof}

For more comfortable visualization let us present the received results in tables (fig. \ref{Fig:18}, \ref{Fig:19}).

\begin{figure}[h]
		\includegraphics[width=0.85\textwidth]{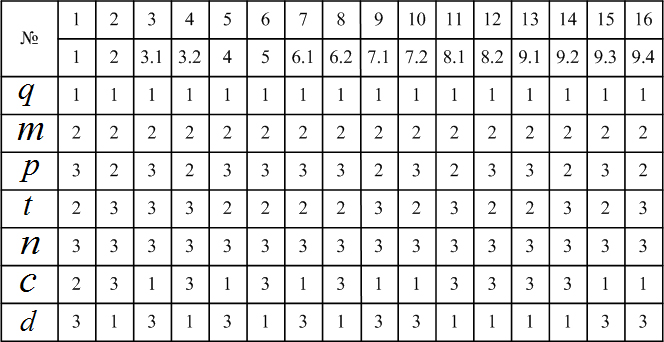}
	\caption{}
	\label{Fig:18}
\end{figure}

\begin{figure}[h]
		\includegraphics[width=0.85\textwidth]{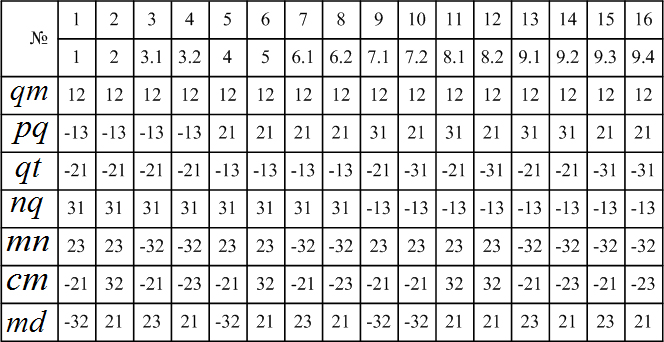}
	\caption{}
	\label{Fig:19}
\end{figure}

\textit{Remarks}:

\begin {itemize}
\item 
The sign minus (--) in table(fig. \ref{Fig:19}) is situated in the places, where the direction of the bypass changes.
\item 
The coloring is used instead of signs 1, 2 and 3 for more comfortable visualization of the graph $\widetilde{H}_{min}$ and its conjugated graph $\widetilde{G}_{min}$.
\item 
The notation of colors: 1 -- red; 2 -- green; 3 -- blue.
\end {itemize}

All the results are summerized in table (fig. \ref{Fig:20}).

\begin{figure}[h]
		\includegraphics[width=0.99\textwidth]{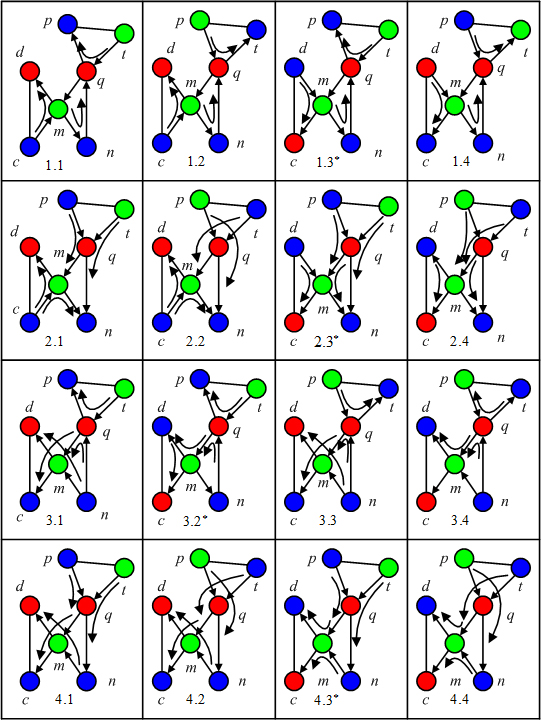}
	\caption{}
	\label{Fig:20}
\end{figure}

It's quite possible for the notation of the variants in the tables and in the text to differ a little bit. It depends on the graphical representation in colors and in settings of the signs ($1$, $2$ and $3$). But all the differences can be easily seen in the tables and the figures.

Having finished the researches on the graphs of the special type, let us examine the number properties of the planar graphs.

\section {Conclusions:}

\begin {enumerate}

\item 
A possibility of the examining of some properties of the planar triangulation $L$ on the base of special minimal graphs $H$ is shown.

\begin {itemize}

\item 
There can be five possible variants of Euler cycles in the planar conjugated triangulation (closed-loop) at the given direction of the pass through one of the external vertexes.

\item 
There can be nine possible variants of Euler cycles in the planar conjugated triangulation (open-loop) at the given direction of the pass through one of the external vertexes.

\end {itemize}

\item 
The examining of graph $G$, which is the adjacent graph to the minimal conjugated graph $H$ (the second conversion of graph $L$), indicated that its vertexes can be colored correctly in three colors.

\item 
But in order to confirm the correctness of the coloring $16$ variants of the pass through the minimal graph must be checked.

\item 
Such researches on the minimal graphs can help in the improvement of the complexity of the algorithms, which may be constructed in order to achieve the correct coloring.

\end {enumerate}

\end{document}